\theoremstyle{plain}
\newtheorem{theorem}{Theorem}
\newtheorem{lemma}{Lemma}
\theoremstyle{definition}
\theoremstyle{remark}
\newcommand\numberthis{\addtocounter{equation}{1}\tag{\theequation}}
\begin{document}
%
\title{Throughput Optimal Beam Alignment in \\ Millimeter Wave Networks}
\author{Muddassar Hussain and Nicolo Michelusi
\thanks{M. Hussain and N. Michelusi are with the School of Electrical and Computer Engineering, Purdue University. \emph{email}: \{hussai13,michelus\}@purdue.edu.}
\thanks{This research has been funded by NSF under grant CNS-1642982.}
\vspace{-7mm}
}
\maketitle
\begin{abstract}
Millimeter wave communications rely on narrow-beam transmissions to cope with the strong signal attenuation at these frequencies, thus demanding precise beam alignment between transmitter and receiver.
The communication overhead incurred to achieve beam alignment may become a severe impairment in mobile networks. This paper addresses the problem of optimizing beam alignment acquisition, with the goal of maximizing throughput. Specifically, the algorithm jointly determines the portion of time devoted to beam alignment acquisition, as well as, within this portion of time, the optimal beam search parameters, using the framework of Markov decision processes. It is proved that a \emph{bisection search} algorithm is optimal, and that it outperforms
exhaustive and iterative search algorithms proposed in the literature.
The duration of the beam alignment phase is optimized so as to maximize the overall throughput.
The numerical results show that the
throughput, optimized with respect to the duration of the beam alignment phase, achievable under the exhaustive algorithm is 88.3\% lower than that achievable under the bisection algorithm. Similarly, the throughput achievable by the iterative search algorithm for a division factor of 4 and 8 is, respectively, 12.8\% and 36.4\% lower than that achievable by the bisection algorithm.  
\end{abstract}
\begin{IEEEkeywords}
Millimeter Wave, beam alignment, initial access, Markov decision process
\end{IEEEkeywords}
\vspace{-3mm}
\section{introduction}
Mobile data traffic has shown a tremendous growth in the past few decades. Over the last decade alone,  mobile data traffic has grown 4000-fold and is expected to increase by $53\%$ in each year until 2021 \cite{cisco2}. Traditionally, mobile data traffic is served almost exclusively by wireless systems operating under 6 GHz, due to the availability of low-cost hardware and favorable propagation characteristics at these frequencies. However, many current and future applications, such as virtual/augmented reality, high definition video streaming, will require a much higher data rate, which cannot be supported by sub-6 GHz networks due to limited bandwidth availability. 

\par Recently, there has been increasing interest in the research community in developing systems utilizing frequencies in the 28-100 GHz range, the so called 
millimeter wave (mm-wave) frequencies, as a way to alleviate the spectrum crunch \cite{rappaport,gosh,MicheICC}.
This increased interest can be attributed to the availability of larger bandwidth in the mm-wave frequency band, which can better address the demands of the ever increasing mobile traffic. According to Friis' law, at the mm-wave frequency a higher isotropic path loss is incurred compared to sub-6 GHz systems \cite{rappaport_book}. In order to overcome these challenging channel conditions, mm-wave communications are expected to leverage narrow-beam communications \cite{channel_model}, hence both base stations and mobile devices will be equipped with many antennas with multiple-input multiple-output processing, such as precoding, beamforming, and combining to achieve directionality and alleviate the propagation loss at these frequencies.
\par However, maintaining beam alignment between transmitter and receiver in mm-wave networks can be very challenging, especially in mobile scenarios. The resulting communication overhead may thus become the bottleneck of the system. Hence, it is imperative to optimize the beam alignment algorithm to minimize the communication overhead, while optimizing network performance such as delay, or throughput.
Motivated by this challenge, this paper addresses optimal design of the beam alignment algorithm to maximize throughput.

\par  In the literature, the issue of beam alignment has been partly studied under the topic of initial access, \emph{i.e.}, the procedure by which a mobile user (MU) discovers and connects to a mm-wave base station (BS) [\citen{exhaustive}--\citen{jandrew}]. While the initial access is a simple task in legacy cellular systems such as LTE, it becomes a a challenging task in mm-wave networks since not only the MU has to discover the base station using directional beams, but also the MU and BS need to agree upon a beam pattern to be used for future communications. To this end, several schemes for the initial access in mm-wave networks have been proposed in [\citen{exhaustive}--\citen{jandrew}]. One of the most popular ones is called the \emph{exhaustive search}, whereby the BS and the MU sequentially search through all possible combinations of transmit and receive beam patterns \cite{exhaustive}. An \emph{iterative search} algorithm is proposed in \cite{iterative}, where the BS first searches in wider sectors by using wider beams, and then refines the search within the best sector. Similarly, in \cite{two-step}, a two-step initial search procedure is proposed, where the macro BS disseminates the GPS coordinates of the BSs in the vicinity omni-directionally to the MUs and an MU decides a beamforming pattern for the best BS by using its own GPS coordinates followed by an exhaustive search by the BS. In [\citen{zorzi},\citen{jandrew}], different variants of exhaustive search are studied. Specifically, link level performance of different variants of exhaustive search is studied in \cite{zorzi}, while \cite{jandrew} studies network wide performance of these variants using stochastic geometry. It should be noted that these variants of exhaustive search algorithms result from different combinations of directional and omni-directional beamforming at the BS and MU.
\begin{figure}[!t]
\centering
\vspace{-3mm}
\includegraphics[width=0.47\columnwidth]{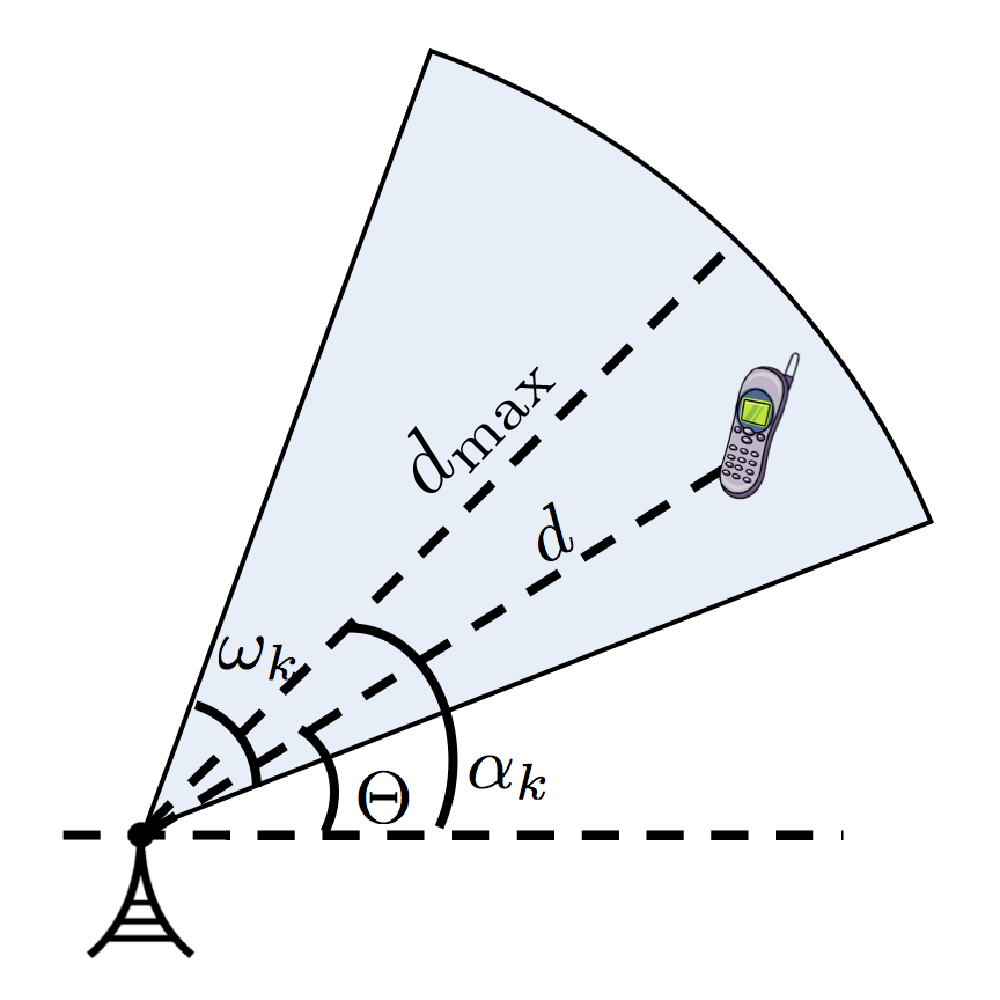}
\caption{The beam pattern
under the sectored antenna model \cite{sectored_model}.}
\label{fig:beam_pattern}
\vspace{-6mm}
\end{figure}
\par In all of the aforementioned papers, the optimality of the search algorithms is not established.  In this paper, we design a beam alignment protocol with the goal of maximizing the throughput to the MU. Specifically, we consider a time-slotted system and focus on downlink. We allocate a fraction of the frame length to the sensing/search phase and the remaining slots for communication. We assume that the MU receives omni-directionally and the BS transmits a number of sensing beacons with varying directional beam patterns in the sensing phase to detect the MU with the goal to maximize the throughput in the data communications period. We use a Markov decision process (MDP) formulation to model the sensing phase and find an optimal sensing policy in closed-form which maximizes the downlink throughput.
We prove that the iterative and exhaustive algorithms proposed in the literature are suboptimal and that, instead, 
a bisection search algorithm is optimal. Moreover, 
we optimize the duration of the sensing phase to maximize the overall throughput.
We show numerically that the
throughput, optimized with respect to 
the duration of the sensing phase,
achievable under the bisection algorithm outperforms by 88.3\% that achievable by the 
exhaustive search algorithm, and by 12.8\% and 36.4\% that achievable under the iterative search algorithm, with division factor of 4 and 8, respectively.
\vspace{-3mm}
\section{System Model}
We consider a mm-wave based cellular system with one base station (BS) and one mobile user (MU), as shown in Fig.~\ref{fig:beam_pattern}. 
Time is slotted with slot duration $T$ [seconds].
It is assumed that the BS is located at the origin and the MU is located at polar coordinates $(d,\Theta)$ with respect to the BS,
where $d\in (0,d_{\max})$ is the distance from the BS,
$d_{\max}$ is the coverage area of the BS,
and $\Theta\in [-\pi,\pi)$ is the angular coordinate,
as shown in Fig.~\ref{fig:beam_pattern}; we assume that $\Theta$ is uniformly distributed in $[-\sigma/2,\sigma/2]$, \emph{i.e.}, $\Theta\sim \textnormal{Uniform}[-\sigma/2,\sigma/2]$,
where $\sigma\in(0,2\pi]$.
In this paper, we approximate the transmission beam of the BS using the \emph{sectored antenna model} \cite{sectored_model}, as
 depicted in Fig.~\ref{fig:beam_pattern}. Thus,  $\omega_k$ and $\alpha_k$ denote the beam-width and angle of departure in slot $k$, respectively. It should be noted that we ignore the effect of secondary beam lobes. Moreover, it is assumed that the MU receives isotropically.
 
 We now introduce the beam alignment protocol, whose aim is to optimize the alignment between transmitter and receiver by leveraging the directionality of mm-wave transmissions.
Beam alignment, herein also termed "sensing",
 and data communication are performed in an alternating fashion. An abstract timing diagram of the MU  beam alignment protocol is shown in Fig.~\ref{fig:timing_diagram}, which illustrates both the sensing and data communication phases. We assume that one frame has duration $N=L+M\geq 1$ and comprehends an initial sensing phase, of duration $L$ slots, with $L\in\{0,1,\dots, N\}$,
 followed by a data communication phase, of duration $M=N-L$ slots.
In the beginning of each time slot $k$ during the sensing period, the BS sends a beacon $b_k$ with beam parameters $(P_{TX,k},\alpha_{k},\omega_{k})$
and of duration $T_B<T$
to detect the MU, where $P_{TX,k}$ denotes the transmission power of the BS.

In this paper, we assume that $P_{TX,k}$ is chosen such that the  signal-to-noise ratio (SNR) measured at the MU at any distance $d\leq d_{\max}$ is above the SNR threshold required to ensure the successful detection of the beacon at the MU. Moreover, we assume that the acknowledgment (ACK) from the MU is received perfectly by the BS (for instance, by using a conventional microwave technology as a control channel \cite{Rangan}).
Thus, for tractability, we assume that the misdetection probability is zero, and leave the more general analysis for future work.
It follows that we can express $P_{TX,k}$ as $P_{TX,k}=\rho_{TX}\omega_k$,
where $\rho_{TX}$ is the power per unit radiant required 
to achieve the target SNR.

If the MU is located within the transmitted beam area, it receives the beacon successfully and transmits an ACK packet to the BS, denoted as $c_k=1$ in slot $k$, received within the end of the slot. Otherwise, the BS declares a timeout (in this case, $c_k=0$).
Afterwards, the BS continues sensing in subsequent time slots until the end of the sensing period.
\begin{figure}[!t]
\vspace{-2mm}
\centering
\includegraphics[width=\columnwidth]{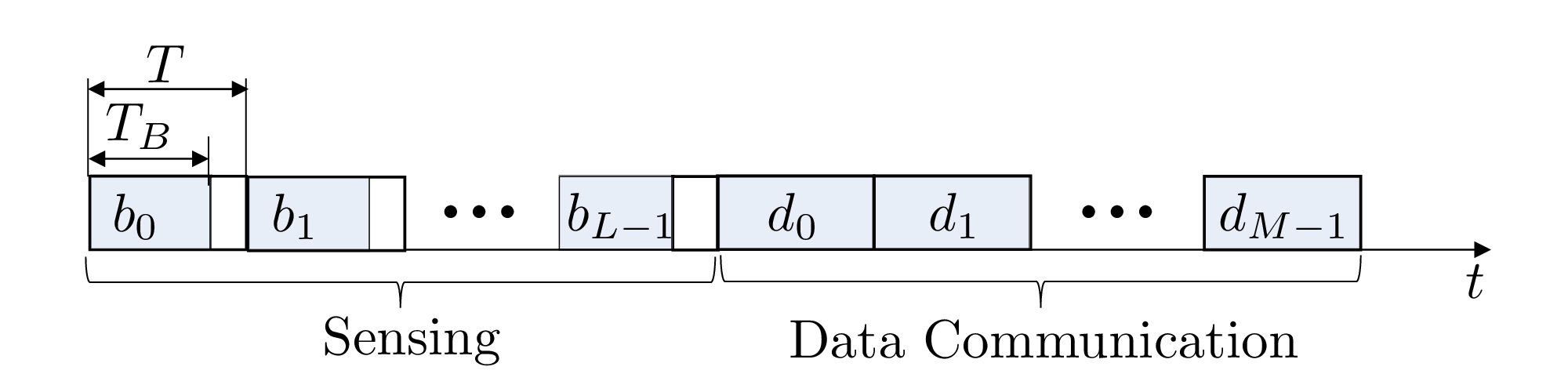}
\caption{The timing diagram of the sensing and data communication protocol.
$b_k$ denotes the sensing beacon in slot $k$, for $0\leq k<L$, whereas
$d_k$ denotes the data packet transmitted in slot $k+L$, for $0\leq k<M$.
}
\label{fig:timing_diagram}
\vspace{-7mm}
\end{figure}
\vspace{-4mm}
\section{Problem Formulation and Optimization}
In this section, we formulate the optimization problem as a Markov Decision Process (MDP)  \cite{bertsekas}, and optimize the sensing parameters to maximize the overall throughput over a sensing and data communication cycle. An MDP is defined by the 5-tuple $\{\mathcal T,\mathcal{S},\mathcal A,\mathbb{P}_k(S_k|S_{k-1},A_{k-1}),r_k(S_k,A_k),\forall k\in\mathcal T\}$, where $\mathcal T$ is the time horizon of the MDP, $\mathcal{S}$ is the state space, $\mathcal A$ is the set of actions, $\mathbb{P}_k(S_k|S_{k-1},A_{k-1})$ is the ensemble of transition probabilities given the previous state-action pair $(S_{k-1},A_{k-1})\in\mathcal S\times\mathcal A$, and $r_k(S_k,A_k)$ is the reward in slot $k$ given the state-action pair $(S_k,A_k)\in\mathcal S\times\mathcal A$.  
In our case, $\mathcal T{\equiv}\{0,1,\ldots,L\}$ represents the indexes of sensing time slots.
The slots $0\leq k<L$ correspond to the sensing phase, whereas, in slot $k=L$, the BS selects
 the beam parameters used in the data communication phase.
The state $S_k$ is 
the probability density function (PDF) of the angular coordinate $\Theta$ of the MU at the beginning of slot $k$, hence
\begin{align}
\int_{-\pi}^{\pi}S_k(\theta)\mathrm d\theta=1,\ S_k(\theta)\geq 0,\forall \theta\in[-\pi,\pi).
\end{align}
The action $A_k=[\alpha_k-\omega_k/2\;\;\alpha_k+\omega_k/2]$ specifies the beam pattern used in slot $k$ in the sensing phase (if $0\leq k<L$) and in the data communication phase (if $k=L$).
Thus, the action space is given by $\mathcal A\equiv \{[\alpha-\omega/2\;\;\alpha+\omega/2]:-\pi\leq\alpha<\pi,0<\omega \leq 2\pi\}$.
In slot $L$, the BS selects the beam parameters $A_L$ and transmits until slot $N$ using this beam.
We assume that the BS employs a fixed transmission power $P_{TX}$
in the data communication phase. This assumption presumes that 
this phase is the most energy demanding one of the entire transmission frame.
We define the reward function in slot $L$ as the throughput
achievable over one frame, per unit slot, \emph{i.e.},
\begin{align}
\label{eq:reward1}
&r_L(S_L,A_L){=}\frac{N{-}L}{N}\!\int_{A_L}\!\!S_L(\theta)\mathrm d\theta\log_2(1+\mathrm{SNR}(|A_L|)),
\end{align}
where $|A_k|=\int_{A_k} d\theta=\omega_k$ is the beam-width,
$(N-L)/N$ is the fraction of slots allocated to data communication,
and $\int_{A_L}S_L(\theta)\mathrm d\theta$
is the probability that the MU is inside the beam $A_L$. The $\mathrm{SNR}(|A_L|)$ is given as
\begin{align}
\label{SNReq}
\mathrm{SNR}(|A_L|) = \frac{\gamma_L}{|A_L|},\text{ where }\gamma_L{\triangleq}\frac{P_{TX}^{(L)}d_{\max}^{-\beta}}{2\pi N_0},
\end{align}
$P_{TX}^{(L)}$, $d_{\max}$, $\beta$, and $N_0$ denote the fixed transmission power of the BS over the data transmission slots, the maximum distance between the BS and the MU, the path loss exponent and the one-sided power spectral density of the noise component of the received signal, respectively. Herein, we assume that the noise is additive white Gaussian (AWGN). The term $2\pi$ in the denominator of
$\gamma_L$ in (\ref{SNReq})  corresponds to the omni-directional gain of the receiver, whereas $|A_L|$ in the denominator of 
$\mathrm{SNR}(|A_L|)$
corresponds to the directional gain at the BS, which is part of our design.
Moreover, we assume that the beacons duration $T_B\ll T$, thereby the total energy consumption in the sensing phase is small compared to that in the data communication phase. Thus, letting $P_{\mathrm{avg}}$ be the average power constraint over one frame, and assuming an equal transmission power allocation in the data communication phase, we obtain $P_{TX}^{(L)}=N\cdot P_{\mathrm{avg}}/(N{-}L)$.
For slots $k<L$ in the sensing phase, we have 
$r_k(S_k,A_k)=0$, since no throughput is accrued in these slots. However, these slots are functional to improving beam alignment in the data communication phase.
\vspace{-4mm}
\subsection{Transition Probabilities}
\label{sec:txprob}
At the beginning of the sensing phase, the belief is given by $S_0(\theta)=\frac{1}{\sigma}\chi(\theta\in[-\sigma/2,\sigma/2])$,
where $\chi(\cdot)$ is the indicator function,
since the angular coordinate is uniformly distributed over $[-\sigma/2,\sigma/2]$. We assume that $S_0$ is known.
Given the sequence of actions $A^{k-1}=(A_0,A_1,\dots,A_{k-1} )$ and of ACKs or timeouts $C^{k-1}=(C_0,C_1,\dots,C_{k-1})$, and the initial PDF (prior) $S_0$, the BS computes 
$S_k$ as 
\begin{align}
S_k(\theta)=f(\Theta=\theta|A^{k-1},C^{k-1}),
\end{align}
where $f(\cdot|\cdot)$ denotes the conditional PDF.
We let $U_k\triangleq\mathrm{supp}(S_k)$, where $\mathrm{supp}(f)$ denotes the support of $f$ over $[-\pi,\pi)$.
In particular, $U_0=[-\sigma/2,\sigma/2]$. Now, we can get
\begin{align}
\label{eq:postbelief}
&S_{k+1}(\theta)=f(\Theta=\theta|A^{k},C^{k-1},C_k=c_k)
\\&
\nonumber
\stackrel{(a)}{=}
\frac{\mathbb P(C_k=c_k|A^{k},C^{k-1},\Theta=\theta)
f(\Theta=\theta|A^{k},C^{k-1})
}
{\int_{-\pi}^\pi
\mathbb P(C_k=c_k|A^{k},C^{k-1},\Theta=\tilde\theta)
f(\Theta=\tilde\theta|A^{k},C^{k-1})\mathrm d\tilde\theta
}
\\&
\stackrel{(b)}{=}
\frac{
\mathbb P(C_k=c_k|A_k,\Theta=\theta)S_k(\theta)
}
{
\int_{-\pi}^\pi
\mathbb P(C_k=c_k|A_k,\Theta=\tilde\theta)S_k(\tilde\theta)
\mathrm d\tilde\theta
},
\end{align}
where in step (a) we have used Bayes'rule;
in step (b) we have used the fact that
$C_k=1\Leftrightarrow \theta\in A_k$, hence
$C_k=\chi(\theta\in A_k)$, which is thus a deterministic function of $A_k$ and $\theta$, independent 
of $(A^{k-1},C^{k-1})$; moreover, we have used the fact that 
$f(\Theta=\theta|A^{k},C^{k-1})=S_k(\theta)$ (since $\Theta$
is independent of $A_k$ given $(A^{k-1},C^{k-1})$).
Thus, $S_{k+1}$ is a function of $(S_k,A_k,C_k)$.
Similarly, 
\begin{align}
&\mathbb P(C_k=c_k|A^{k},C^{k-1})
\\&=
\nonumber
\int_{-\pi}^\pi\mathbb P(C_k=c_k|A^{k},C^{k-1},\Theta=\theta)
f(\Theta=\theta|A^{k},C^{k-1})\mathrm d\theta
\\&
=
\int_{-\pi}^\pi\mathbb P(C_k=c_k|A_k,\Theta=\theta)
S_k(\theta)\mathrm d\theta,
\end{align}
\emph{i.e.}, $C_k$ depends on
$(A^{k-1},C^{k-1})$ only through the current state-action pair $(A_k,S_k)$.
Thus,  we conclude that $S_{k+1}$ is statistically independent of $(S^{k-1},A^{k-1})$,
given $(S_k,A_k)$, and thus satisfies the Markov property.

In the following lemma, we provide a closed form expression of the belief $S_k$, and show that it can be expressed solely as a function of the initial belief $S_0$ and its support $U_k$.
\begin{lemma}
\label{lemma_2}
Given $S_0$ and $U_k=\mathrm{supp}(S_k)$,
the PDF of $\Theta$ in slot $k$, $S_k$,
is given by
\begin{align}
\label{eq:belief}
S_k(\theta)=\chi(\theta\in U_k)
\frac{S_0(\theta)}{\int_{U_k}S_0(\tilde\theta)\mathrm d\tilde\theta}.
\end{align}
\end{lemma}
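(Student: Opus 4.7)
The plan is to prove the lemma by induction on the slot index $k$, leveraging the Bayesian update formula already derived in equation~(\ref{eq:postbelief}) and the fact that the likelihood $\mathbb{P}(C_k=c_k\mid A_k,\Theta=\theta)$ is a $\{0,1\}$-valued function of $\theta$ (since $C_k=\chi(\theta\in A_k)$ is deterministic given $A_k$ and $\Theta$).

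\textbf{Base case ($k=0$).} Here $U_0=[-\sigma/2,\sigma/2]=\mathrm{supp}(S_0)$, so $\chi(\theta\in U_0)S_0(\theta)=S_0(\theta)$ and $\int_{U_0}S_0(\tilde\theta)\mathrm d\tilde\theta=1$; the claimed formula reduces to $S_0(\theta)=S_0(\theta)$.

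\textbf{Inductive step.} Assume the formula holds for $k$. I would split according to the observed outcome $c_k\in\{0,1\}$. Since $\mathbb P(C_k=1\mid A_k,\Theta=\theta)=\chi(\theta\in A_k)$ and $\mathbb P(C_k=0\mid A_k,\Theta=\theta)=\chi(\theta\notin A_k)$, the update (\ref{eq:postbelief}) implies that the new support satisfies $U_{k+1}=U_k\cap A_k$ when $c_k=1$ and $U_{k+1}=U_k\setminus A_k$ when $c_k=0$. Substituting the inductive hypothesis into the numerator and denominator of (\ref{eq:postbelief}) gives, in either case,
\begin{align*}
S_{k+1}(\theta)
&=\frac{\chi(\theta\in U_{k+1})\,S_0(\theta)/\!\int_{U_k}\!S_0(\tilde\theta)\mathrm d\tilde\theta}
{\int_{U_{k+1}}S_0(\tilde\theta)\mathrm d\tilde\theta\,/\!\int_{U_k}\!S_0(\tilde\theta)\mathrm d\tilde\theta}
=\chi(\theta\in U_{k+1})\frac{S_0(\theta)}{\int_{U_{k+1}}S_0(\tilde\theta)\mathrm d\tilde\theta},
\end{align*}
which is exactly the claimed expression at time $k+1$.

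The proof is essentially bookkeeping; there is no serious obstacle. The only conceptual point worth emphasizing is why the likelihood collapses the update to a pure restriction of support: because each sensing beacon returns a hard in-beam/out-of-beam indicator, Bayes' rule does not reweight the prior within the surviving region, it merely truncates and renormalizes. Consequently, the belief at any slot is fully parametrized by the initial prior $S_0$ together with the (random) support $U_k$, which is the qualitative statement the lemma makes precise and which will be exploited in the subsequent MDP analysis to reduce the infinite-dimensional state $S_k$ to the set-valued state $U_k$.
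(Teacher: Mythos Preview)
Your proof is correct and follows essentially the same inductive argument as the paper: both verify the base case trivially and, in the inductive step, exploit that the likelihood $\mathbb P(C_k=c_k\mid A_k,\Theta=\theta)$ is the indicator of $A_k$ (or its complement), so that the Bayesian update (\ref{eq:postbelief}) truncates the support to $U_{k+1}=A_k^{c_k}\cap U_k$ and renormalizes. The only cosmetic difference is that the paper handles the two outcomes $c_k\in\{0,1\}$ jointly via the shorthand $A_k^{c_k}$, whereas you split them explicitly.
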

\begin{proof}
We prove this lemma by induction. Clearly, we have
\begin{align}
S_0(\theta)=
\chi(\theta\in U_0)S_0(\theta)
=
\chi(\theta\in U_0)
\frac{S_0(\theta)}{\int_{U_0}S_0(\tilde\theta)\mathrm d\tilde\theta},
\end{align}
where we have used the fact that $\int_{U_0}S_0(\tilde\theta)\mathrm d\tilde\theta=1$ since $U_0=\mathrm{supp}(S_0)$.
Thus, $S_0$ can be expressed as (\ref{eq:belief}) with $k=0$.
Now assume $S_j$ is expressed as (\ref{eq:belief}) for some $j\geq 0$;
we show that $S_{j+1}$ is also expressed as (\ref{eq:belief}).
By letting 
\begin{align}
 A_j^{c_j} = 
\begin{cases}
 A_j, \text{ if }c_j = 1\\
 A_j^c,\text{ if }c_j = 0,
\end{cases}
\end{align}
and using the fact that $c_j=\chi(\theta\in A_j)$, from (\ref{eq:postbelief}) we have
\begin{align}
&S_{j+1}(\theta)=
\frac{
\chi(\theta\in A_j^{c_j})
S_j(\theta)}
{
\int_{-\pi}^\pi
\chi(\tilde\theta\in A_j^{c_j})S_j(\tilde\theta)
\mathrm d\tilde\theta
}
\end{align}
By using the induction hypothesis, we get
\begin{align}
S_{j+1}(\theta)=
\frac{
\chi(\theta\in A_j^{c_j}\cap U_j)
S_0(\theta)
}
{
\int_{-\pi}^\pi
\chi(\tilde\theta\in A_j^{c_j}\cap U_j)
S_0(\tilde\theta)
\mathrm d\tilde\theta
},
\label{xx}
\end{align}
Since $S_{j+1}(\theta){=}0$ outside of $A_j^{c_j}\cap U_j$,
we obtain $U_{j+1}\equiv A_j^{c_j}\cap U_j$ and
(\ref{eq:belief}) by substituting $U_{j+1}=A_j^{c_j}\cap U_j$
in (\ref{xx}).
\end{proof}
The implication of this lemma is that,
given the prior belief $S_0$ in slot $0$,
the support $U_k$ is a sufficient statistics. In fact, we can reconstruct the belief at time $k$ via (\ref{eq:belief}).
Importantly, this result holds even when $S_0$ is not an uniform distribution. Thus, in the following, we express the belief on the angular coordinate via the uncertainty set $U_k$. From the proof of Lemma \ref{lemma_2}, we note that the sequence $\{U_k,k\geq 0\}$ defining the support of $\{S_k,k\geq 0\}$ is obtained recursively as
\begin{align}
\label{recursion}
U_{k+1} = A_{k}^{c_{k}}\cap U_{k}.
\end{align}
\par Thus, when $\Theta \sim \text{Uniform}[-\sigma/2,\sigma/2]$,
with support $U_0=[-\sigma/2,\sigma/2]$,
from Lemma \ref{lemma_2}
we obtain
\begin{align}
\label{uniformcase}
S_k(\theta)=
\frac{\chi(\theta\in U_k)}{|U_k|}.
\end{align}
 We now investigate the form of the transition probabilities. If $C_k=1$, from (\ref{recursion}) we have that
$U_{k+1} = A_{k}\cap U_{k}$,
which occurs with probability
\begin{align}
&\mathbb P(C_k=1|U_k,A_k)=
\mathbb P(\Theta\in A_k|U_k,A_k)
\\&
=
\int_{A_k\cap U_k}S_k(\theta)\mathrm d\theta
=
\frac{|A_k\cap U_k|}{|U_k|},
\end{align}
where we have used the fact that $C_k=\chi(\Theta\in A_k)$,
and in the last step we used (\ref{uniformcase}).
On the other hand, if $C_k=0$, $U_{k+1} = A_{k}^{c}\cap U_{k}$,
which occurs with probability
\begin{align}
\mathbb P(C_k=0|U_k,A_k)=
1-\frac{|A_k\cap U_k|}{|U_k|}.
\end{align}
\vspace{-8mm}
\subsection{Optimization Problem and Value function Formulation}
We define the policy $\mu$ as $A_k=\mu_k(S_k)$
for $k=0,\dots,L$, which selects the beam parameters as a function of the PDF $S_k$ during the sensing and data communication phases.
The goal is to determine the optimal policy $\mu^*$ to maximize the throughput
$r_L(S_L,A_L)$, \emph{i.e.},
\begin{align}
\label{optprob}
\mu^*=\underset{\mu}{\textnormal{arg max}}
\ \mathbb E_\mu[r_L(S_L,A_L)|S_0].
\end{align}
Herein, we solve this optimization problem via dynamic programming \cite{bertsekas}. 
We denote the optimal  value function corresponding to the optimization problem (\ref{optprob}) as $V_k^*(U_k)$.

We derive the value function corresponding to each stage of the MDP and find the optimal beam parameters (\emph{i.e.}, the optimal policy $\mu^*$) for the sensing and data communication phases. Let $V_L(U_L,A_L)$ denote the value function at slot $k{=}L$
as a function of the state-action pair $(U_L,A_L)$.
Clearly, $ V_{L}(U_L,A_L){=}r_L(S_L,A_L)$,
with $S_L$ given by (\ref{uniformcase}),
and
$V_L^*(U_L){=}\underset{A_L\in\mathcal A}{\text{max}}V_L(U_L,A_L)$.
We 
obtain
\begin{align*}
\label{eq:val_fun_L}
V_{L}(U_L,&A_L)\leq V_{L}^*(U_L) =\max_{A_L\in\mathcal A} r_L(S_L,A_L)\\
&\stackrel{(a)}{=} \max_{A_L\in\mathcal A} 
\frac{N-L}{N}
\frac{|A_L\cap U_L|}{|U_L|} \log_2\left(1+\frac{\gamma_L}{|A_L|}\right)\\
&\stackrel{(b)}{\leq} \max_{0\leq|\tilde A_L|\leq |U_L|}
\frac{N-L}{N}
\frac{|\tilde A_L|}{|U_L|} \log_2\left(1+\frac{\gamma_L}{|\tilde A_L|}\right)\\
&\stackrel{(c)}{=}\frac{N-L}{N} \log_2\left(1+\frac{\gamma_L}{|U_{L}|}\right)\triangleq \tilde V_{L}^*(U_L),\numberthis
\end{align*}
where (a) follows by (\ref{eq:reward1}); (b) follows by letting $\tilde A_L \triangleq A_L \cap U_L \subseteq U_L$ and by using the fact that $|A_L|\geq |\tilde A_L|$; and (c) follows from the fact that the function 
\begin{align}
\tilde{V}_L(|U_L|,|\tilde A_L|) =  \frac{N-L}{N}\frac{|\tilde A_L|}{|U_L|} \log_2\left(1+
\frac{\gamma_L}{|\tilde A_L|}\right)
\end{align}
with $\gamma_L>0$
is a strictly increasing function of $|\tilde A_L|$ and hence,
it is maximized by $|\tilde A_L| = |U_L|$. The claim that $\tilde V_L(|U_L|,|\tilde A_L|)$ is strictly increasing in $|\tilde A_L|$ 
follows from the fact that
\begin{align*}
\frac{\partial \tilde V_L}{\partial |\tilde A_L|}=\frac{N-L}{N|U_L|\ln 2}\biggl[ \ln\left(1+\frac{\gamma_L}{|\tilde A_L| }\right) 
- \frac{\gamma_L/|\tilde A_L|}{1+\gamma_L/|\tilde A_L|} \biggr]>0
\end{align*}
since $\ln(1+y)> y/(1+y)$ for $y>0$.
Note that the upper bound $\tilde V_{L}^*(U_L)$ in (\ref{eq:val_fun_L})
can be achieved if $U_L$ is compact\footnote{Herein, since $U_k$ reflects angular coordinates, we define compactness up to a rotation of $2\pi$.} by choosing $A_L^*=U_L$,
which thus defines the optimal beam parameters $A_L^*$ for the
data communication phase when $U_L$ is compact.
This follows from the fact that, when $U_L$ is compact, then
$\tilde A_L\equiv A_L \cap U_L$ is a feasible beam, $\tilde A_L\in\mathcal A$.
More in general, for non compact $U_L$ we have the bound
\begin{align}
\label{upbound}
V_{L}^*(U_L)\leq
\tilde V_{L}^*(|U_L|) \triangleq \frac{N-L}{N} \log_2\left(1+\frac{\gamma_L}{|U_{L}|}\right),
\end{align}
and  thus, it is optimal to preserve compactness of $U_L$.
Hereafter, we will show that, indeed, the optimal sensing algorithm
preserves the compactness of $U_L$, so that the upper bound (\ref{upbound}) is attained.
We refer to $\tilde V_{k}^*(u)$ as the optimal value function under compactness constraint, \emph{i.e.}, achieved by a compact $U_k$ of size $|U_k|=u$. We have the following theorem.
\begin{theorem}
We have that
\begin{align*}
\label{eq:value_function_k}
V_k(U_k,A_k) \leq \tilde V_k^*(|U_k|)\ \forall U_k, A_k: 0\leq k \leq L \numberthis
\end{align*}
where
\begin{align}
\tilde V_k^*(|U_k|) = \frac{N-L}{N} \log_2\left(1+\frac{2^{L-k}\gamma_L}{|U_k|}\right).
\end{align}
The upper bound in (\ref{eq:value_function_k}) holds with equality if $U_k=[U_{k,\min},U_{k,\max}]$ is compact
and $A_j{=}A_j^*,\;\forall j{:}k{\leq}j{\leq}L$, where 
\begin{align}
\label{eq:beam1}
A_j^* = \left[U_{j,\min}\ ,\ \frac{U_{j,\min}+U_{j,\max}}{2}\right], k\leq j< L
\end{align}
or
\begin{align}
\label{eq:beam2}
A_j^* = 
\left[\frac{U_{j,\min}+U_{j,\max}}{2}\ ,\ U_{j,\max} \right], k\leq j< L,
\end{align}
where $U_{j,\min},U_{j,\max}$ are the extremes of the \emph{compact}
intervals $U_j=[U_{j,\min},U_{j,\max}],k\leq j\leq L$,
and 
\begin{align}
\label{eq:beam3}
A_L^* = U_L.
\end{align}
\end{theorem}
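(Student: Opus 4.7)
The plan is to proceed by backward induction on $k$, from $k=L$ down to $k=0$. The base case $k=L$ is already in hand: the chain of inequalities (\ref{eq:val_fun_L}) together with the non-compact bound (\ref{upbound}) give $V_L(U_L,A_L)\leq \tilde V_L^*(|U_L|)$, which matches the formula in the statement since $2^{L-L}=1$, with equality whenever $U_L$ is compact and $A_L^*=U_L$ as in (\ref{eq:beam3}).

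For the inductive step, fix $k<L$ and assume the bound holds at slot $k+1$. Since $r_k\equiv 0$ during the sensing phase, the Bellman equation combined with the transition probabilities derived in Section~\ref{sec:txprob} gives
\begin{align*}
V_k(U_k,A_k)=\frac{|A_k\cap U_k|}{|U_k|}V_{k+1}^*(A_k\cap U_k)+\frac{|A_k^c\cap U_k|}{|U_k|}V_{k+1}^*(A_k^c\cap U_k).
\end{align*}
Writing $u\triangleq|U_k|$ and $x\triangleq|A_k\cap U_k|\in[0,u]$, and invoking the induction hypothesis on each $V_{k+1}^*$ term, bounds the right-hand side above by
\begin{align*}
h(x)\triangleq\frac{x}{u}\tilde V_{k+1}^*(x)+\frac{u-x}{u}\tilde V_{k+1}^*(u-x).
\end{align*}

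The crux is to show that $h(x)\leq\tilde V_k^*(u)$, with equality iff $x=u/2$. Setting $c\triangleq 2^{L-k-1}\gamma_L$ and $\phi(y)\triangleq y\log_2(1+c/y)$, a direct second-derivative computation (of the same flavor as the one already performed between (\ref{eq:val_fun_L}) and (\ref{upbound})) shows $\phi''(y)<0$ on $(0,\infty)$, so $\phi$ is strictly concave. Hence $u\cdot h(x)=\frac{N-L}{N}[\phi(x)+\phi(u-x)]$ is strictly concave in $x$ and symmetric about $u/2$, so it is uniquely maximized at $x=u/2$. Evaluating yields $h(u/2)=\frac{N-L}{N}\log_2(1+2c/u)=\tilde V_k^*(u)$, establishing the claimed upper bound.

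Finally, for the equality case, suppose $U_k=[U_{k,\min},U_{k,\max}]$ is compact and $A_k^*$ is chosen as in (\ref{eq:beam1}) or (\ref{eq:beam2}). Then $|A_k^*\cap U_k|=u/2$ realizes the maximum of $h$, and both $A_k^*\cap U_k$ and $(A_k^*)^c\cap U_k$ remain compact intervals, so the induction hypothesis applies with equality at slot $k+1$ along both branches; it follows that $V_k(U_k,A_k^*)=\tilde V_k^*(|U_k|)$. The main obstacle in this plan is the concavity-plus-symmetry argument that pins down $x=u/2$ as the unique maximizer of $h$; once that is settled, the preservation of compactness under bisection is immediate by construction, and the induction closes.
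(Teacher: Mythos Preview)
Your proposal is correct and follows essentially the same backward-induction route as the paper: base case $k=L$ from (\ref{eq:val_fun_L})--(\ref{upbound}), Bellman recursion with the two-branch transition, induction hypothesis on $V_{k+1}^*$, and a concavity argument to show the mixture is maximized at the midpoint $x=u/2$. The only cosmetic difference is in how that concavity step is justified: the paper invokes the joint concavity of the perspective $t\log_2(1+x/t)$ (citing \cite{boyd}) and applies Jensen's inequality to the pair $(t_1,t_2)=(x/u,1-x/u)$, whereas you verify $\phi''<0$ directly and use the symmetry of $\phi(x)+\phi(u-x)$; these are equivalent arguments.
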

\begin{proof}
First, note that, if $U_k$ is compact
and $A_k$ is given by (\ref{eq:beam1})-(\ref{eq:beam3}),
then $U_{k+1}$ is compact. This directly follows by (\ref{recursion}). Thus, by induction, if 
$U_k$ is compact
and $A_j$ are given by (\ref{eq:beam1})-(\ref{eq:beam3})
for $j\geq k$, then $U_j$ are compact for $j\geq k$.

We prove the theorem by induction.
In (\ref{eq:val_fun_L}) and subsequent discussion, we have proved 
the claim of the theorem
and the fact that $A_L^* = U_L$ when $U_L$ is compact for the case $k=L$.
Thus, $V_L(U_L,A_L) \leq \tilde V_L^*(|U_L|)$,
and the upper bound is achievable when $U_L$ is compact and
$A_L=U_L$.
Now, let $k<L$ and assume that 
$V_{k+1}(U_{k+1},A_{k+1}) \leq \tilde V_{k+1}^*(|U_{k+1}|)$,
with upper bound achievable when $U_{k+1}$ is compact  and $A_j$ are given by (\ref{eq:beam1})-(\ref{eq:beam3}) for $j\geq k+1$. 
This hypothesis has been already proved for $k=L-1$.
We show that this implies 
$V_{k}(U_{k},A_{k}) \leq \tilde V_{k}^*(|U_{k}|)$,
achievable  when $U_k$ is compact and $A_j$ are given by (\ref{eq:beam1})-(\ref{eq:beam3}) for $k\leq j\leq L$.
The value function in slot $k$ as a function of the
state-action pair $(U_{k},A_{k})$ satisfies
\begin{align*}
\label{eq:val_fun_L_1_g}
&V_{k}(U_{k},A_{k})
\\&
=
\mathbb{E}[V_{k+1}^*(U_{k+1})|U_{k},A_{k}]
\leq \mathbb{E}[\tilde V_{k+1}^*(|U_{k+1}|)|U_{k},A_{k}]\\
&=\frac{N-L}{N}\biggl[
\frac{|A_{k}\cap U_{k}|}{|U_{k}|} \log_2\left(1+\frac{2^{L-1-k}\gamma_L}{|A_{k}\cap U_{k}|}\right)\\ 
&+ \left(1-\frac{|A_{k}\cap U_{k}|}{|U_{k}|}\right) \log_2\left(1+\frac{2^{L-1-k}\gamma_L}{ |A_{k}^c\cap U_{k}|}\right)\biggr],\numberthis
\end{align*}
where $V_{k+1}^*(U_{k+1})=\max_{A_{k+1}}V_{k+1}(U_{k+1},A_{k+1})$, and we have used the induction hypothesis.
In the last step, we have used the fact that $U_{k+1}=A_k\cap U_k$ with probability $|A_k\cap U_k|/|U_k|$, otherwise $U_{k+1}=A_k^c\cap U_k$ (see Sec. \ref{sec:txprob} and (\ref{recursion})).
From the induction hypothesis, equality holds above if $U_{k+1}$ is compact
and $A_j$ are chosen as in (\ref{eq:beam1})-(\ref{eq:beam3}) for $k+1\leq j\leq L$.
Letting $\tilde A_{k} \triangleq A_{k} \cap U_{k}\subseteq U_{k}$,
it then follows that
\begin{align*}
\label{eq:val_func_L_1}
V_{k}(U_{k},A_{k})&\leq \frac{N-L}{N}\biggl[\frac{|\tilde A_{k}|}{|U_{k}|} \log_2\left(1+\frac{2^{L-1-k}\gamma_L}{|\tilde A_{k}|}\right)\\ 
&+ \left(1-\frac{|\tilde A_{k}|}{|U_{k}|}\right) \log_2\left(1+\frac{2^{L-1-k}\gamma_L}{ |U_{k}|-|\tilde A_{k}|}\right)\biggr],\numberthis
\end{align*}
where we have used the fact that $\tilde A_{k}^c \cap U_{k} =  (A_{k}^c\cup U_{k}^c)\cap U_{k}=A_{k}^c\cap U_{k}$,
hence $|A_{k}^c\cap U_{k}|=|U_k|-|\tilde A_k|$. The function $\log_2(1+x)$ is concave in $x$, hence its perspective $t\log_2(1+x/t)$
is concave in $(x,t)$ for $x\geq 0$, $t>0$ \cite{boyd}.
Thus, by applying Jensen's inequality with $t_1\in(0,1)$ and $t_2=1-t_1$ we obtain
\begin{align*}
\frac{1}{2}t_1&\log_2\left(1+\frac{x}{t_1}\right)+
\frac{1}{2}t_2\log_2\left(1+\frac{x}{t_2}\right)
\\&
\leq
\frac{t_1+t_2}{2}\log_2\left(1+\frac{x}{\frac{t_1+t_2}{2}}\right)=\frac{1}{2}\log_2\left(1+2x\right).\numberthis
\end{align*}
By using this inequality with $x=\frac{2^{L-1-k}\gamma_L}{|U_{k}|}$,
$t_1=\frac{|\tilde A_{k}|}{|U_{k}|}$ and $t_2=1-t_1$, we 
can upper bound (\ref{eq:val_func_L_1}) as
 \begin{align*}
  V_{k}(U_{k},A_{k})\leq \frac{N-L}{N} \log_2\left(1+\frac{ 2^{L-k}\gamma_L}{|U_{k}|}\right)=\tilde V_{k}^*(|U_{k}|).
 \end{align*}
By inspection, this upper bound can be attained with equality if $U_{k}$ is compact and $A_{j} = A_{j}^*$ are given by (\ref{eq:beam1})-(\ref{eq:beam3}), $\forall j\geq k$.
 The induction step and the theorem are thus proved.
\end{proof}
\vspace{-5mm}
Since $U_0=[-\sigma/2,\sigma/2]$ is compact,
it can be inferred from Theorem 1 that the policy $A_j^*$
defined by (\ref{eq:beam1})-(\ref{eq:beam3})
is sufficient to preserve the compactness of subsequent $U_j,\ \forall j >0$ and is optimal.
By using Theorem 1, we can get $V_0^*(\sigma)$ as
\begin{align*}
&V_0^*(\sigma)=
\max_{\mu}\mathbb E_\mu[r_L(S_L,A_L)|S_0]{=}\frac{N-L}{N}\log_2 \left(1{+}\frac{2^{L}\gamma_L}{\sigma}\right)
\\&=\frac{N-L}{N}\log_2 \left(1{+}\frac{N2^{L}\gamma_0}{\sigma(N-L)}\right),\numberthis
\label{optimalperf}\end{align*}
where $\gamma_L$ is given by (\ref{SNReq}) and $\gamma_0 = (N-L)\gamma_L/N$. In the following, we express the dependence of $V_0^*(\sigma)$ on $(\sigma,L)$ as $V_0^*(\sigma,L)$.
Thus, $V_0^*(\sigma,L)$ denotes the maximum average throughput
achievable 
under the assumption that a portion $L$ of $N$ slots are allocated
for sensing.
\par Herein, we maximize $V_0^*(\sigma,L)$ with respect to $0\leq L\leq N$, for a given pair $(\sigma,N)$,
by solving the optimization problem
\begin{align}
L^* = \underset{L\in\{0,1,\ldots,N\}}{\textnormal{arg max}}\qquad 
V_0^*(\sigma,L).
\end{align}
The following theorem proves structural properties of 
$V_0^*(\sigma,L)$, which can be useful to optimize $L$ numerically.

\begin{lemma}
The function $V_0^*(\sigma,L)$ is a strictly log-concave 
function of $L\in [0,N]$.
\end{lemma}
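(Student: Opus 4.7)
The plan is to extend $V_0^*(\sigma,\cdot)$ to continuous $L\in[0,N]$ and show that $\frac{d^2}{dL^2}\ln V_0^*(\sigma,L)<0$ on $(0,N)$, which gives strict log-concavity on the closed interval (the endpoint value $V_0^*(\sigma,N)=0$ being consistent with log-concavity). First I would rewrite
\begin{align*}
\ln V_0^*(\sigma,L)=\ln(N-L)-\ln N-\ln\ln 2+\ln G(L),
\end{align*}
where $G(L)\triangleq\ln(1+u(L))$ and $u(L)\triangleq\frac{N\gamma_0\,2^L}{\sigma(N-L)}$. Since $\frac{d^2}{dL^2}\ln(N-L)=-1/(N-L)^2$, the task reduces to establishing the sharp bound $(\ln G)''(L)<1/(N-L)^2$ for $L\in[0,N)$.

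Next I would compute $G'$ and $G''$ in closed form. The useful observation is that $\ln u(L)=L\ln 2-\ln(N-L)+\mathrm{const.}$, so $u'/u=\ln 2+1/(N-L)$ and $u''/u=(u'/u)^2+1/(N-L)^2$. Substituting these into $G'=u'/(1+u)$ and $G''=[u''(1+u)-(u')^2]/(1+u)^2$ and collecting terms, one obtains the clean identity
\begin{align*}
G G''-(G')^2=\frac{u\,(G-u)\bigl(\ln 2+\tfrac{1}{N-L}\bigr)^2}{(1+u)^2}+\frac{u\,G}{(1+u)(N-L)^2}.
\end{align*}

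To conclude, I would invoke two classical refinements of the concavity of $\ln$: (a) $\ln(1+u)\le u$, which makes the first term nonpositive since $G\le u$; and (b) $\ln(1+u)\ge u/(1+u)$, which rearranges to $u/[(1+u)G]\le 1$ and bounds the second term by $G^2/(N-L)^2$. Both inequalities are strict for $u>0$, and since $u(L)>0$ for all $L\in[0,N)$, one gets $(\ln G)''(L)<1/(N-L)^2$, i.e., strict log-concavity.

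The main obstacle I expect is the algebraic simplification that produces the factor $u(G-u)$ in $GG''-(G')^2$; once that identity is in hand, the two log-inequalities close the argument immediately. A naive alternative, namely decomposing $V_0^*$ as the product $\frac{N-L}{N}\cdot\log_2(1+u(L))$ and applying log-concavity of each factor separately, is unlikely to work, because a spot check shows that the second factor is not log-concave in $L$ for $L$ close to $N$, so one genuinely needs the $-1/(N-L)^2$ contribution from the first factor to offset the growth of $(\ln G)''$ near the endpoint.
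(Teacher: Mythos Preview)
Your proposal is correct and takes essentially the same approach as the paper: both compute the second derivative of $\ln V_0^*(\sigma,L)$ and bound it using the two elementary inequalities $\ln(1+x)\le x$ and $\ln(1+y)>y/(1+y)$. Your presentation is organized slightly differently (isolating $(\ln G)''$ and comparing it to $1/(N-L)^2$, rather than writing out the full second derivative at once), but the algebra and the two key log inequalities are identical to the paper's argument.
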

\begin{proof}
Let $f(L) = \ln (V_0^*(\sigma,L))$, and $\zeta = \frac{(N-L)\gamma_L}{N\sigma}$, then
\begin{align*}
\numberthis
 &\frac{d^2 f(L)}{d L^2} = \frac{\frac{2^L\zeta}{N-L}\left(\frac{1}{N-L}+\ln 2\right)^2}{\left(1+\frac{2^L\zeta}{(N-L)}\right)\ln\left(1+\frac{2^L\zeta}{(N-L)}\right)}\\
 &\times \left[1-\frac{\frac{2^L\zeta}{N-L}}{1+\frac{2^L\zeta}{N-L}}\left(1+\frac{1}{\ln\left(1+\frac{2^L\zeta}{(N-L)}\right)}\right)\right]
 \end{align*}
\begin{align*}
 &+ \frac{2^L\zeta}{(N-L)^3}\frac{1}{\left(1+\frac{2^L\zeta}{(N-L)}\right)\ln\left(1+\frac{2^L\zeta}{(N-L)}\right)}-\frac{1}{(N-L)^2}\\
& \stackrel{(a)}{\leq} \frac{\frac{2^L\zeta}{(N-L)^3}}{\left(1+\frac{2^L\zeta}{(N-L)}\right)\ln\left(1+\frac{2^L\zeta}{(N-L)}\right)} -\frac{1}{(N-L)^2}
\stackrel{(b)}{<}0,
\end{align*}
where (a) follows from the fact that $\ln(1+x)\leq x$ for $x\geq 0$; (b) follows from $\ln(1+y)> y/(1+y)$ for $y>0$. 
Since $\frac{d^2 f(L)}{d L^2}<0$, then $f(L)$ is strictly concave in $L$, which implies that $V_0^*(\sigma,L)$ is strictly log-concave in $L$.
\end{proof}
Since $V_0^*(\sigma,L)$ is strictly log-concave,
we can first find $\hat L$ which maximizes $\ln (V_0^*(\sigma,L))$ over $[0,N]$ via convex optimization algorithms.
Then, the optimal $L^*$ over the \emph{discrete} set $\{0,1,\dots,N\}$ is obtained by solving
\begin{equation}
L^* = \underset{L\in\left\{\lceil\hat L \rceil,\lfloor\hat L\rfloor\right\}}{\textnormal{arg max}}\quad V_0^*(\sigma,L).
\end{equation}
\vspace{-3mm}
\section{Performance Analysis}
In this section, we compare the proposed algorithm (also referred to as \emph{bisection algorithm}), exhaustive search algorithm, and iterative search algorithm in terms of throughput performance. Herein, we assume that in the exhaustive search the BS scans at most $K$ consecutive non-overlapping sectors within $U_0$ each having width of $\sigma/K$. Once the MU is detected, the remaining slots are used for communication with the same beam pattern corresponding to the sensing slot when the MU was detected. In the iterative search, the BS divides $U_0$ into $M$ consecutive non-overlapping sectors each having size of $\sigma/M$, and scans at most $M-1$ regions to determine the sector where the MU is located. After finding it, the BS divides this sector into $M$ non-overlapping sectors each having width of $\sigma/M^2$ and scans at most $M-1$ sectors to locate the sector containing the MU. This process continues until the end of the sensing phase.

Note that the bisection algorithm is
a special case of the iterative one with $M=2$. Moreover, since the bisection algorithm has been optimized via dynamic programming, it always outperforms the iterative one.
On the other hand, the exhaustive algorithm has random sensing duration,
as opposed to the fixed sensing duration of the bisection algorithm,
since the BS scans different sectors until the MU is detected. Despite this inherent difference, in the next section we prove analytically that the bisection algorithm outperforms the exhaustive one as well, for all values of the sensing duration $L$.
\vspace{-5mm}
\subsection{Bisection versus Exhaustive Search}
Let's consider the exhaustive search algorithm where the MU receives isotropically and the BS uses $K\leq N$ non-overlapping beam patterns, each of width $\sigma/K$. Therefore, $K$ is the maximum duration of the sensing phase. The probability that the MU is detected in slot $J{=}j,0{\leq}j{\leq}K-1$ is $\mathbb P(J{=}j){=}1/K$, hence the average sensing duration is $\hat L\triangleq 1+\mathbb E[J]=(K+1)/2$. Therefore, 
the average throughput under exhaustive search is given by
\begin{align*}
\hat V_0(\sigma,\hat L) &{=} 
\mathbb{E}_J\left[\frac{N-J-1}{N} \log_2\left(1+\frac{NK\gamma_0}{(N-J-1)\sigma}\right)\right]\\
&\stackrel{(a)}{<}
\frac{N-\hat L}{N} \log_2\left(1+\frac{N(2\hat L - 1)\gamma_0}{(N-\hat L)\sigma} \right)\numberthis\\
&\stackrel{(b)}{\leq}
\frac{N-\lfloor\hat L\rfloor}{N} \log_2\left(1+\frac{N2\lfloor\hat L\rfloor\gamma_0}{(N-\lfloor\hat L\rfloor)\sigma} \right),
\numberthis\label{questo2}
\end{align*}
where in (a) we used the fact that $\log(1+x)$ is concave in $x$, hence its perspective function $t \log(1+x/t)$ is concave in $t>0$  \cite{boyd}, and Jensen's inequality; in (b), we used the fact that
$t \log(1+x/t)$
is an increasing function of $t$,
and 
$2\hat L-1\leq 2\lfloor\hat L\rfloor$ since $\hat L\in\{\lfloor\hat L\rfloor,\lfloor\hat L\rfloor+1/2\}$.
By comparing the throughput under the bisection and exhaustive search algorithms, we obtain the following lemma.
\vspace{-1mm}
\begin{lemma}
The bisection algorithm strictly outperforms the exhaustive one.
\end{lemma}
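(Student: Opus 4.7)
The plan is to use the upper bound on the exhaustive-search throughput already established in (\ref{questo2}) and compare it, argument-by-argument, with the bisection throughput $V_0^*(\sigma,\lfloor\hat L\rfloor)$ from (\ref{optimalperf}). Both expressions share the same prefactor $(N-\lfloor\hat L\rfloor)/N$, so the comparison collapses to the elementary scalar inequality $2^{\lfloor\hat L\rfloor}\ge 2\lfloor\hat L\rfloor$, which holds for every integer $\lfloor\hat L\rfloor\ge 1$ by a one-line induction.

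Concretely, I would fix the exhaustive-search parameter $K\ge 1$, so that $\hat L=(K+1)/2$. From (\ref{questo2}),
\begin{equation*}
\hat V_0(\sigma,\hat L) < \frac{N-\lfloor\hat L\rfloor}{N}\log_2\!\left(1+\frac{2N\lfloor\hat L\rfloor\gamma_0}{(N-\lfloor\hat L\rfloor)\sigma}\right).
\end{equation*}
Next, I would evaluate (\ref{optimalperf}) at the feasible sensing duration $L=\lfloor\hat L\rfloor$ to obtain
\begin{equation*}
V_0^*(\sigma,\lfloor\hat L\rfloor) = \frac{N-\lfloor\hat L\rfloor}{N}\log_2\!\left(1+\frac{N\,2^{\lfloor\hat L\rfloor}\gamma_0}{(N-\lfloor\hat L\rfloor)\sigma}\right).
\end{equation*}
Since $\log_2(1+\cdot)$ is strictly increasing, the inequality $2^{\lfloor\hat L\rfloor}\ge 2\lfloor\hat L\rfloor$ immediately yields $V_0^*(\sigma,\lfloor\hat L\rfloor)\ge$ the right-hand side of (\ref{questo2}). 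Combined with $V_0^*(\sigma,L^*)\ge V_0^*(\sigma,\lfloor\hat L\rfloor)$ (by optimality of $L^*$), this would chain to $\hat V_0(\sigma,\hat L)<V_0^*(\sigma,L^*)$.

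The main obstacle I expect is the edge case $\lfloor\hat L\rfloor\in\{1,2\}$, where $2^n=2n$ and therefore the last comparison is only non-strict. In this case, strictness must be inherited from (\ref{questo2}). This is indeed always available: inequality (a) of (\ref{questo2}) applies Jensen to the strictly concave perspective $t\mapsto t\log_2(1+c/t)$, and is therefore strict whenever the distribution of $J$ is non-degenerate, i.e., whenever $K\ge 2$. The only remaining case, $K=1$, makes $J$ deterministic so that (a) is tight; but then $2\hat L-1=1<2=2\lfloor\hat L\rfloor$, hence inequality (b) is strict instead. Thus at least one of (a) or (b) is strict for every $K\ge 1$, and the concatenated bound $\hat V_0(\sigma,\hat L)<V_0^*(\sigma,\lfloor\hat L\rfloor)\le V_0^*(\sigma,L^*)$ is strict in all cases, which would establish Lemma~3.
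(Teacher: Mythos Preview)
Your proposal is correct and follows essentially the same route as the paper: set $L=\lfloor\hat L\rfloor$, invoke the bound (\ref{questo2}) on $\hat V_0$, and reduce the comparison with (\ref{optimalperf}) to the scalar inequality $2^{L}\ge 2L$. The paper's argument is the same one-liner, but it silently relies on the strictness of step~(a) in (\ref{questo2}); your explicit treatment of the edge cases $K=1$ (where Jensen is tight but (b) is strict) and $\lfloor\hat L\rfloor\in\{1,2\}$ (where $2^{L}=2L$ but (\ref{questo2}) is strict) is a welcome tightening of the exposition.
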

\vspace{-4mm}
\begin{proof}
Since $\hat L$ may not be an integer,
 we compare the performance of exhaustive search with that of the bisection algorithm with sensing duration $L=\lfloor\hat L\rfloor$.
 From (\ref{optimalperf}) and
 (\ref{questo2}),
 the performance gap between the two algorithms is given by
\begin{align*}
&V_0^*(\sigma,L)-\hat V_0(\sigma,\hat L)
\\&>
\frac{N{-}L}{N}
\left[\log_2\!\left(1{+}\frac{N2^{L}\gamma_0}{\sigma(N{-}L)}\right){-}
\log_2\!\left(1{+}\frac{N2L\gamma_0}{\sigma(N{-}L)}\right)\right]{\geq}0
\end{align*}
since $2^L\geq 2L$.
The lemma is thus proved.
\end{proof}
\begin{figure}[!t]
\centering
\vspace{-5mm}
\includegraphics[width=\columnwidth]{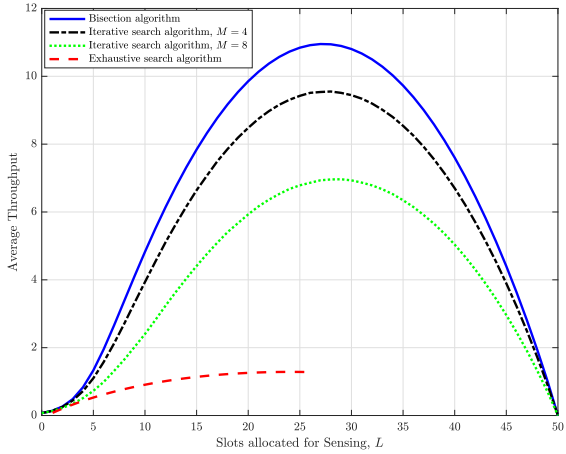}
\vspace{-6mm}
\caption{The average throughput versus sensing duration, $L$; $\gamma_0=-5\mathrm{dB}$, $\sigma = 2\pi$, $N=50$.
For exhaustive search, $1\leq K\leq N$.
}\vspace{-7mm}
\label{fig:throughput}
\end{figure}
\subsection{Numerical Results}
We consider the following scenario: $N=50$, $\gamma_0=-5\mathrm{dB}$, $\sigma=2\pi$.
In Fig.~\ref{fig:throughput}, we plot the throughput achieved by bisection, iterative, and exhaustive search algorithms
as a function of the sensing duration $L$.
Note that the throughput curves exhibit a quasi concave trend. It can also be noticed that the curve corresponding to the proposed bisection algorithm
achieves superior performance with respect to the exhaustive and iterative search algorithms, as proved analytically. 
Of particular interest is to compare the "peak" throughput of these algorithms, obtained by optimizing over the sensing duration $L$. We observe a performance degradation of approximately 12.8\% and 36.4\% for
the iterative algorithm with $M=4$ and $M=8$, respectively, compared to the bisection algorithm. Similarly, the peak throughput performance of the exhaustive algorithm is 88.3\% smaller than that of the bisection algorithm.
\section{Conclusion}
In this paper, we have studied the design of the optimal beam alignment algorithm in mm-wave downlink networks,
so as to maximize the throughput. We have proved
the optimality of a bisection algorithm,
and showed that it outperforms other algorithms proposed in the literature, such as exhaustive search and iterative search.
Moreover, we have formulated an optimization problem to find the optimal duration of the sensing phase in order to maximize the throughput, and we have shown that the iterative
algorithms with division factors of 4 and 8
and the exhaustive search algorithm 
achieve 12.8\%, 36.4\% and 88.3\% lower "peak" throughput 
than the bisection algorithm, respectively.
\bibliographystyle{IEEEtran}
\bibliography{IEEEabrv,biblio}

\end{document}